
\typeout{IJCAI--25 Instructions for Authors}


\documentclass{article}
\pdfpagewidth=8.5in
\pdfpageheight=11in

\usepackage{ijcai25}

\usepackage{times}
\usepackage{soul}
\usepackage{url}
\usepackage[hidelinks]{hyperref}
\usepackage[utf8]{inputenc}
\usepackage[small]{caption}
\usepackage{graphicx}
\usepackage{amsmath}
\usepackage{amsthm}
\usepackage{amssymb}
\usepackage{booktabs}
\usepackage{algorithm}
\usepackage{algorithmic}
\usepackage[switch]{lineno}
\usepackage{subcaption}
\usepackage{color}
\usepackage{tabularray}
\usepackage{natbib}

\nolinenumbers

\urlstyle{same}



\newtheorem{lemma}{Lemma}
\newtheorem{theorem}{Theorem}





\pdfinfo{
/TemplateVersion (IJCAI.2025.0)
}

\title{Advancing Differentiable Economics: A Neural Network Framework for Revenue-Maximizing Combinatorial Auction Mechanisms}

    

\author{
Mai Pham
\and
Vikrant Vaze\And
Peter Chin
\affiliations
Thayer School of Engineering at Dartmouth\\
\emails
mai.p.pham.th@dartmouth.edu,
}

\begin{document}

\maketitle

\begin{abstract}
Differentiable economics, which uses neural networks as function approximators and gradient-based optimization in automated mechanism design (AMD), marked a significant breakthrough with the introduction of RegretNet \citep{regretnet_paper}. It combines the flexibility of deep learning with a regret-based approach to relax incentive compatibility, allowing for approximations of revenue-maximizing auctions. However, applying these techniques to combinatorial auctions (CAs) - where bidders value bundles rather than individual items, capturing item interdependencies - remains a challenge, primarily due to the lack of methodologies that can effectively deal with combinatorial constraints. To tackle this, we propose two architectures: CANet, a fully connected neural network, and CAFormer, a transformer-based model designed to learn optimal randomized mechanisms. Unlike existing methods in traditional AMD, our approach is more scalable and free of assumptions about the structures of allowable bundles or bidder valuations. We demonstrate that our models match current methods in non-combinatorial settings and set new benchmarks for CAs. Specifically, our models consistently outperform benchmark mechanisms derived from heuristic approaches and provide empirical solutions where analytical results are unavailable. This work bridges the gap in applying differentiable economics to combinatorial auctions, offering a scalable and flexible framework for designing revenue-maximizing mechanisms.

\end{abstract}

\section{Introduction}

Mechanism design, a field of economics, focuses on creating incentives and interaction rules among self-interested agents to achieve specific objectives for the group. It differs from traditional game theory by starting with a desired outcome and designing the rules so that agents acting in their self-interest naturally lead to that outcome. A key application of mechanism design is to create auction rules, which play a significant role in economic activities such as the fine art market, advertising on search engines or e-commerce platforms \citep{bajari_2003_ebay,edelman_2007_ads}.

The Vickrey-Clarke-Groves (VCG) mechanism is optimal and strategy-proof when the goal is to maximize total bidder welfare \citep{Vickrey,Clarke,Groves}. However, revenue maximization poses a greater challenge. \cite{Myerson} resolved the revenue-maximizing strategy-proof auction for a single item, later extended by \cite{maskin_1989} to multiple copies of a single item. The general revenue-maximizing auction for multiple items remains unsolved decades later, with only specific two-item cases addressed \citep{armstrong_2000,avery_2000}.

The lack of theoretical progress in revenue-maximizing multi-item auctions inspired the development of Automated Mechanism Design (AMD) \citep{sandholm_2003_amd}. AMD uses computational methods to tailor mechanisms to specific problem instances, with the use of heuristic approaches \citep{Sandholm_2002,Sandholm_2015}, but suffers from the curse of dimensionality in large-scale settings. More recently, leveraging the power of deep learning, differentiable economics uses deep neural networks as function approximators to learn optimal auctions. This approach, introduced with RegretNet \citep{regretnet_paper}, offers a scalable alternative to traditional AMD methods.

Since its introduction, RegretNet has inspired a variety of extensions and improvements \citep{algnet_paper,equivarientnet_paper,regretformer_paper}. The complexity of multi-item auctions is further amplified in combinatorial auctions, where bidders express valuations for bundles rather than individual items, capturing complementarities or substitutabilities. Unlike traditional auctions, the interdependencies between items in combinatorial auctions introduce additional challenges. One such challenge is winner determination, a problem that is well known to be NP-hard, which involves allocating bundles to agents while ensuring that none of the bundles overlap. Although the later part of the RegretNet study discusses the 2-item, 2-bidder auction in a combinatorial setting \citep{regretnet_paper}, the problem is still wide open, primarily due to the lack of methodologies to effectively constrain the allocation space to satisfy combinatorial feasibility.

This paper builds on recent advances in auction theory and machine learning to design combinatorial revenue-optimal auctions. It makes the following contributions:
\begin{enumerate}
    \item It provides an empirical method to find \textbf{near-optimal solutions to revenue-maximizing CA}, addressing problems where analytical solutions are not available. This method is not limited by assumptions about allowable bundle structures or bidder valuations. Similar to the previous work in differentiable economics \citep{regretnet_paper}, the method is approximately strategy-proof.
    \item Unlike previous CA studies, \citep{Sandholm_2015,tacchetti2022combinatorialauction_groves,jamshidi2024diffprivacy}, our models identify \textbf{randomized (lottery) mechanisms}, extending the RegretNet family to CA problems. It is well known that randomization can increase revenue in CAs. Our empirical experiments show substantially higher performance than deterministic mechanisms.
    \item It introduces two architectures: \textbf{CANet} and \textbf{CAFormer}, extending prior designs from non-combinatorial auctions to a CA setting. While CANet is sensitive to the order and structure of the input, CAFormer is permutation-equivariant, offering better generalization in scenarios with varying input configurations. In addition, it implements several techniques to improve stability in training and tackle the vanishing gradient problem.
    \item It provides an example of successful use of \textbf{gradient based methods} for \textbf{constrained optimization problems}, which are known to be challenging. In our particular case, the constraints are related to combinatorial feasibility. This approach can be extended to other differentiable combinatorial optimization problems.
\end{enumerate}

Section \ref{sec:CA_design} introduces the problem statement. Section \ref{sec:method} presents the proposed method. Section \ref{sec:results} provides the results and Section \ref{sec:discussion} concludes with a discussion and the next steps.

\section{Combinatorial Auction Design}\label{sec:CA_design}
\subsection{Optimal CA as a linear program}
\paragraph{General auction setting} We study a setting with one seller, a set $N$ of $n$ bidders, and a set $M$ of $m$ items for sale. In an auction, bidders submit bids for the bundles of items given the auction rule. An auction rule is a pair: an allocation rule $g(.)$ and a payment rule $p(.)$, where $g_i$ is the bundle of items received by the bidder $i$ and $p_i$ is the payment by the bidder $i$.

\paragraph{Non-additive valuations} We define $K$ as the set (indexed by $S$) of bundles of size $k = 2^{m} -1$, which comprises \textit{all possible non-empty combinations} of items for each bidder. The \textit{valuation profile} $v = (v_1, v_2, \ldots, v_n)$ is private information and is unknown to the auctioneer. Let $V_i$ denote the set of all possible valuation functions for the bidder $i$. In a CA, the\textit{ valuation function} for the bidder $i$ is given by the vector $v_i = (v_{i1}, v_{i2}, \ldots, v_{ik})$, where each element specifies the value that bidder $i$ attaches to a certain bundle $S \in K$ of items from $M$. Each valuation function $v_i$ is sampled from a continuous probability density $F_i$, and $F_i$ is positive on all $V_i$. The auction is \textit{symmetric} if $F_i = F \: \forall i \in N$, and \textit{asymmetric} if $F_i \neq F_j$ for $i \neq j$.

\paragraph{Quasi-linear utilities} We make the standard assumption that each bidder $i$ has a \textit{quasi-linear utility function}, that is, a bidder $i$ with valuation $v_i$ for the bid profile $b$ receives utility $u_i(v_i; b) = v_i(g_i(b)) - p_i(b)$, where $v_i(g)$ is bidder $i$'s valuation for allocation $g$, $p_i$ is the payment by the bidder $i$.

\paragraph{Incentive Compatibility and Regret} 
A mechanism is a \textit{dominant-strategy incentive compatible (DSIC)} mechanism if each bidder's utility is maximized when reporting true valuation, regardless of what other bidders report.
\begin{equation}
    \label{dsic}
    u_i (v_i; (v_i, b_{-i})) \ge u_i (v_i; (b_i, b_{-i})) \: \forall b_i \ge 0,\: \forall i \in N
\end{equation}

To quantify the violation of DSIC, \textit{expected ex-post regret} for each bidder $i$ is defined by the following equation.
\begin{equation}
    \text{rgt}_i = \mathbf{E} \left[ \max_{v'} u_i (v_i; (v', v_{-i})) - u_i (v_i; (v_i, v_{-i}))) \right]
\end{equation}

\paragraph{Individual Rationality (IR)} A mechanism is \textit{ex-post individual rational} if each bidder receives a nonnegative utility when participating truthfully. 
\begin{equation}
    u_i (v_i; (v_i, b_{-i})) \ge 0 \: \forall b_{-i}, \: \forall i \in N
    \label{ir}
\end{equation}

\paragraph{Combinatorial Feasibility}

Let $z_{iS}$ specify if a bundle \( S \) is allocated to bidder \( i \) (\( z_{iS} = 1 \)) or not (\( z_{iS} = 0 \)). The standard CA formulation requires the following constraints to hold:
\begin{equation}
    \sum_{i\in N} \sum_{S \in K: j\in S} z_{iS} \leq 1 \quad \forall j \in M
    \label{item_sum}
\end{equation}
\begin{equation}
    \sum_{S \in K} z_{iS} \leq 1 \quad \forall i \in N
    \label{agent_sum}
\end{equation}
\begin{equation}
    z_{iS} \in \{0, 1\} \quad \forall i \in N, S \in K
    \label{binary}
\end{equation}

Constraint \eqref{item_sum} ensures that each item \( j \) is allocated to at most one bidder, implicitly guaranteeing that no two bidders can receive overlapping bundles. 
Constraint \eqref{agent_sum} ensures that each bidder \( i \) is assigned at most one bundle.

To make the solution space differentiable, we relax the binary constraint \eqref{binary} by the following theorem:

\begin{theorem}[Birkhoff, 1916]
    Every doubly stochastic matrix is a convex combination of permutation matrices.  
    \label{theoBirkhoff}
\end{theorem}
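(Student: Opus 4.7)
The plan is to prove Birkhoff's theorem by induction on the number of nonzero entries in the doubly stochastic matrix $D \in \mathbb{R}^{n \times n}$. First I would recall that such a matrix has nonnegative entries and row/column sums equal to $1$, so the number of nonzero entries is at least $n$, with equality forcing each row and column to contain a unique nonzero entry of value $1$; that is the base case, where $D$ is itself a permutation matrix and the convex combination is trivial.

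For the inductive step, the key claim is that the support of any doubly stochastic $D$ contains the support of some permutation matrix $P$, equivalently, the bipartite graph $G(D)$ with parts $\{\text{rows}\}$ and $\{\text{columns}\}$ and edge $(i,j)$ iff $D_{ij}>0$ admits a perfect matching. Assuming this, I would set $\alpha = \min\{D_{ij} : P_{ij}=1\} \in (0,1]$. If $\alpha = 1$, then $D-\alpha P$ is nonnegative with all row sums $0$, forcing $D=P$. Otherwise $D' = (D-\alpha P)/(1-\alpha)$ is nonnegative, has row and column sums $(1-\alpha)/(1-\alpha)=1$, and has strictly fewer nonzero entries than $D$ (the entry achieving the minimum is zeroed out). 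By the inductive hypothesis $D' = \sum_t \lambda_t P_t$ with $\lambda_t\ge 0$, $\sum_t \lambda_t = 1$, each $P_t$ a permutation matrix, and then $D = \alpha P + (1-\alpha)\sum_t \lambda_t P_t$ is the desired convex combination.

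The main obstacle, and the heart of the argument, is establishing the perfect matching inside $G(D)$. I would invoke Hall's marriage theorem and discharge its hypothesis by a direct mass-counting argument: for any set $A$ of rows, let $N(A)$ be the set of columns $j$ such that $D_{ij}>0$ for some $i \in A$. The entries of $D$ in rows indexed by $A$ sum to $|A|$ because each row sums to $1$, and these entries are supported only on columns in $N(A)$; summing column-first gives
\[
|A| \;=\; \sum_{i \in A}\sum_{j \in N(A)} D_{ij} \;\le\; \sum_{j \in N(A)} 1 \;=\; |N(A)|,
\]
because each column of $D$ sums to at most (in fact, exactly) $1$. Thus Hall's condition holds for every $A$, a perfect matching exists, and the induction closes. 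The delicate point to watch is the edge case where the matching argument degenerates or $\alpha = 1$ exactly, which is why I handled that case separately above; everything else is a routine verification of nonnegativity and sum constraints.
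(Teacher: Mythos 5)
Your proof is correct and complete: the Hall-condition verification via the row/column mass count is exactly right, the extraction of $\alpha P$ strictly reduces the support size without creating new nonzero entries, and you handle the degenerate case $\alpha = 1$ properly, so the induction closes. Note, however, that the paper itself offers no proof of this statement --- it is quoted as Birkhoff's classical 1916 theorem and used only to justify interpreting a doubly stochastic relaxed allocation as a lottery over deterministic assignments --- so there is no in-paper argument to compare against; what you have written is the standard Birkhoff--von Neumann proof (Hall's marriage theorem plus induction on the number of nonzero entries), which is the canonical route to this result.
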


This ensures that a randomized allocation represented by a doubly-stochastic matrix can be decomposed into a lottery over deterministic one-to-one assignments. Note that since each bundle consists of a unique subset of items, constraint \eqref{item_sum} implies that each bundle can be allocated at most once. So relaxing the binary constraint \eqref{binary}
\begin{equation}
    z_{iS} \in [0, 1] \quad \forall i \in N, S \in K
    \label{real}
\end{equation}
enables the representation of lottery decomposable randomized allocations.

We refer to the inequalities \eqref{item_sum}, \eqref{agent_sum}, and \eqref{real} as \textit{combinatorial constraints}. Any allocation satisfying these constraints is considered \textit{combinatorially feasible}.

\paragraph{Revenue Maximization}
The primary objective of the auctioneer in a CA is to maximize revenue while respecting the constraints of the mechanism. For a given valuation profile \( v \), the revenue for a given payment rule \( p(v) \) is defined as:
\[
\text{rev} = \sum_{i \in N} p_i(v).
\]
The revenue maximization problem in CAs can be formulated as a linear program over the allocation and payment rules. The linear program is expressed as:
\[
\max_{\textbf{g}, \textbf{p}} \quad \sum_{i \in N} p_i(v)
\]
subject to \eqref{dsic}, \eqref{ir}, \eqref{item_sum}, \eqref{agent_sum}, \eqref{real} and \( p_i(v) \ge 0, \: \forall i \in N \). For brevity, in the sections below, we use the same notations $u, \text{rev}, \text{rgt}$ for empirical quantities in the learning problem.

\subsection{Auction Design as a learning problem} \label{regretnet_intro}
RegretNet \citep{regretnet_paper} provides a deep learning framework for optimal auction design, addressing multi-bidder, multi-item scenarios where analytical solutions are unknown. It encodes auction allocation and payment rules within neural network weights.

The architecture comprises two networks: the allocation network \( \mathbf{A}^{\text{net}} \) and the payment network \( \mathbf{P}^{\text{net}} \). Both process a bid vector \( \mathbf{b}^{nm} \) through fully-connected layers. The Allocation Network maps bids to allocation probabilities \( \mathbf{A}^{\text{net}}(\mathbf{b}^{nm}) = \mathbf{Z}^{nm} \), where \( z_{ij} = \frac{e^{\tilde{z}_{ij}}}{\sum_{i=1}^{n+1} e^{\tilde{z}_{ij}}} \) and allows an item to remain unallocated by introducing a dummy participant. The payment network maps bids to payment allocations \( \mathbf{P}^{\text{net}}(\mathbf{b}^{nm}) = \tilde{\mathbf{p}}^n \), where \( \tilde{p}_i \) is scaled using sigmoid, and \( p_i = \tilde{p}_i \sum_{j=1}^m z_{ij}b_{ij} \), adhering to IR constraints \eqref{ir}.

RegretNet minimizes empirical negative revenue plus the regret penalty over profiles \( V \):
\begin{equation}
    \min_{\mathbf{w, \mathbf{\lambda}}, \rho} \left( - \sum_{i \in N} \mathbb{E}[P_i(v; \mathbf{w})] + \lambda_i \text{rgt}_i + \frac{\rho}{2} \text{rgt}_i^2 \right)
    \label{loss_regretnet}
\end{equation}
where \(\text{regret } \text{rgt}_i(v_i', v; \mathbf{w}) = \max_{v_i'}(u(v_i, (v_i', v_{-i}); \mathbf{w}) - u(v_i, v; \mathbf{w}))\). Regret is iteratively minimized using augmented Lagrangian and gradient descent:
\begin{equation}
\mathcal{L}_{\text{inner}}(v_i') = -u(v_i, (v_i', v_{-i}); \mathbf{w})
\label{inner_loss}
\end{equation}
and the Lagrange multipliers \(\lambda_i\) and \(\rho\) are updated as \(\lambda_i \leftarrow \lambda_i + \rho \tilde{R}_i\), \(\rho \leftarrow \rho + \rho \Delta\).

The results of RegretNet are further extended to address combinatorial auctions in a two-item, two-bidder setting by adding the constraint:

\[
\forall i, \quad z_{i, \{1\}} + z_{i, \{2\}} \leq 1 - \sum_{i=1}^{n} z_{i, \{1,2\}}
\]

To accommodate this constraint, two softmax layers are used; one outputs a set of bidder scores, and the other a set of item scores, denoted by $\bar{s}_{i,S}$ and $\bar{s}_{i,S}^{(j)}$. The set of item scores is then used to compute a normalized set of scores for all $i$ and $S$, given by
\[
\bar{\tilde{s}}_{iS} = 
\begin{cases} 
\bar{s}_{iS}^{(i)} & S = \{1\}, \{2\} \\ 
\min \{\bar{s}_{iS}^{(k)}\} & S = \{1, 2\}
\end{cases}
\]
and the final allocation \(z_{i,S}\) is determined by the minimum of the bidder score and normalized item score.
\[
\quad z_{iS} = \min \left( \bar{s}_{iS}, \bar{\tilde{s}}_{iS} \right)
\]

\begin{figure*}[ht]
    \centering
    \begin{subfigure}{\textwidth}
        \centering
        \includegraphics[width=\textwidth]{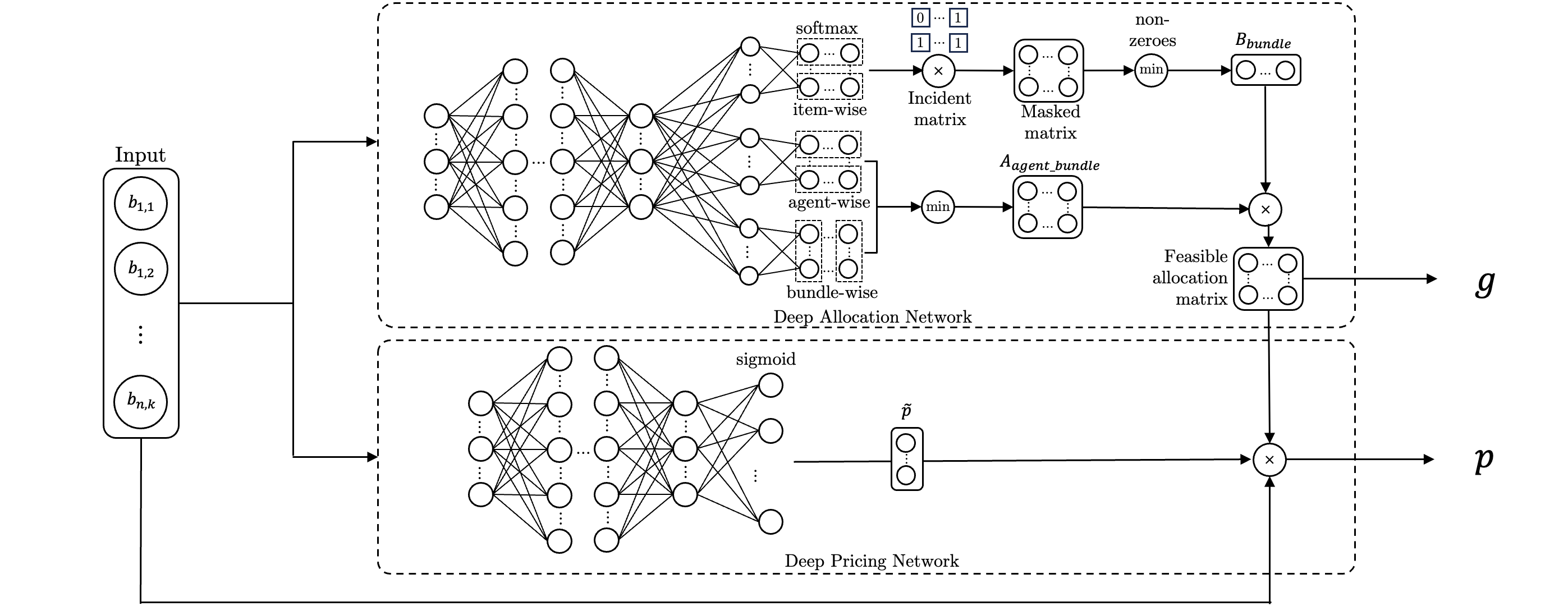}
        \caption{The CANet architecture for $n$ bidders and $m$ items takes $n \times k$ bid vectors as input. It consists of two fully connected network modules: the allocation network, which outputs a feasible allocation matrix $g$, and the pricing network, which generates a pricing vector $\tilde{p}$.}
        \label{fig:CANet_architecture}
    \end{subfigure}
    \hfill
    \begin{subfigure}{\textwidth}
        \centering
        \includegraphics[width=\textwidth]{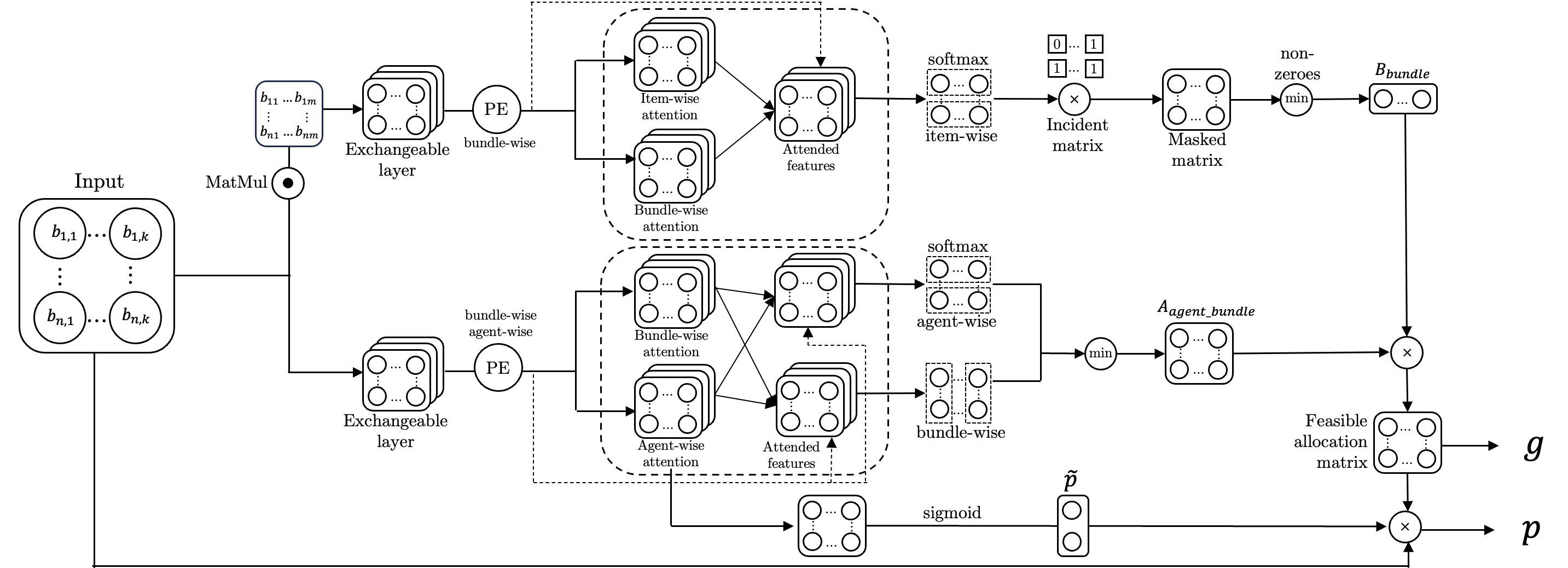}
        \caption{The CAFormer architecture processes $n \times k$ bid matrices using attention mechanism to encode permutation-equivariant item-bundle and agent-bundle representations. It outputs a feasible allocation matrix $g$, with the pricing vector $\tilde{p}$ derived from the agent embedding.}
        \label{fig:CAFormer_architecture}
    \end{subfigure}
    \caption{CANet and CAFormer Architectures.}
    \label{fig:comparison_architectures}
\end{figure*}

\subsection{Related Work}

The seminal works of \cite{Vickrey} introduced auction mechanisms for single items, which were later extended to multi-item settings by developing the Vickrey-Clarke-Groves (VCG) mechanism \citep{Vickrey,Clarke,Groves}. Although VCG achieves efficiency in dominant strategies, it often fails to maximize revenue. \cite{Myerson} laid the theoretical foundation for revenue-maximizing auctions.

The study of combinatorial auctions originated from the need to allocate resources among bidders with complex preferences over bundles of items. \cite{cramton2005survey} provide a detailed review of combinatorial auctions. Optimal combinatorial auctions are NP-hard because of the exponential number of possible bundles and the Incentive Compatibility (IC) and IR constraints. Heuristic and exact methods have been studied to solve the winner determination problem as a combinatorial optimization problem \citep{Sandholm_2002,Wu_2016_clique,Ray_2021}. In the early work, both the winner determination and pricing problems were tackled using integer programming and approximation algorithms. \cite{laviswamy} developed polynomial-time algorithms for approximately optimizing VCG-based mechanisms in certain structured settings. Another direction, extending the results of \cite{roberts1979}, is known as Affine Maximizer Auctions (AMAs). These are variations of the VCG mechanism that adjust the allocation process by assigning positive weights to each bidder's welfare and incorporating boosts for different allocations, potentially leading to higher revenue than the VCG mechanism \citep{Sandholm_2015,Curry_2023_AMA}.

Recent advances in differentiable economics have extended its applications to various domains, reflecting the growing synergy between machine learning and economic theory. These works leverage neural architectures and differentiable approaches to address classical and emerging problems in auction design and beyond. Several papers have extended the ideas introduced in RegretNet. \cite{algnet_paper} model optimal auction as a two-player game and proposes a loss function with fewer hyperparameters. \cite{equivarientnet_paper} and \cite{regretformer_paper} improve the results using different architectures. Others have applied this framework to auctions with fairness or budget constraints \citep{Kuo2020ProportionNet,feng2018budget}, or adapted similar methods to tackle broader mechanism design challenges \citep{Golowich_2018,ravindranath2023twosided}. Several papers leverage machine learning-based approaches in combinatorial auctions. \cite{gianluca_elicite_2018} used a machine learning-based elicitation algorithm to identify which values to query from the bidders. \cite{tacchetti2022combinatorialauction_groves} propose a network architecture to learn Groves payment rules in combinatorial auctions with certain bidding languages. \cite{jamshidi2024diffprivacy} present a machine learning-powered combinatorial auction based on the principles of differential privacy. \cite{brero2021iterativecombinatorial} present a machine learning-powered iterative combinatorial auction. \cite{ravindranath2024RLsequential} use deep Reinforcement Learning for sequential combinatorial auctions.




\section{Proposed Method} \label{sec:method}
After formulating CAs as a learning problem, we will propose models that facilitate end-to-end learning through gradient-based optimization techniques.
\subsection{Constraint Enforcement}
\label{subsec:constraint_enforcement}
Recall from Section \ref{sec:CA_design} that the allocation must satisfy constraints \eqref{item_sum}, \eqref{agent_sum} and \eqref{real}. We decompose the allocation matrix \( \mathbf{Z} \in \mathbb{R}^{n \times k} \) into the product of two matrices:
\[
\mathbf{Z} = \mathbf{B}^{\text{bundle}} \cdot \mathbf{A}^{\text{agent-bundle}},
\]
where 
$\mathbf{A}^{\text{agent-bundle}}$ and $\mathbf{B}^{\text{bundle}}$ are defined below.   This approach is motivated by a two-step allocation process: first, the allocation of items to the bundles, satisfying the item-wise constraint, and then the allocation of the bundles to the bidders, satisfying the other constraints. Each entry in the final allocation matrix represents the probability of allocating a bundle to a bidder, which is calculated by the product of the probability that the bundle is allocated $B^{\text{bundle}}_{S}$ and the probability that the bidder receives the bundle, given that the bundle is allocated $A^{\text{agent-bundle}}_{iS}$.

The item-to-bundle allocation matrix \( \mathbf{B}^{\text{bundle}} \) is derived from the output of the Allocation Network, normalized to represent item-wise probabilities. To construct \( \mathbf{B}^{\text{bundle}} \), we define the incident matrix \( \mathbf{I} \in \mathbb{R}^{m \times k} \), where:
\[
I_{iS} = 
\begin{cases}
1 & \text{if item } i \text{ is included in bundle } S, \\
0 & \text{otherwise}.
\end{cases}
\]

The initial bundle allocation matrix \( \mathbf{B} \in \mathbb{R}^{m \times k} \) is adjusted by mapping bundles to items with a softmax function across $m$ items:
\[
\mathbf{B}^{\text{adjusted}} = \text{softmax}_M (\mathbf{B} \cdot \mathbf{I}).
\]

To ensure valid item-to-bundle allocations, non-positive values in \( \mathbf{B}^{\text{adjusted}} \) are replaced with a large constant \( M \):
\[
\mathbf{B}^{\text{masked}} = 
\begin{cases}
\mathbf{B}^{\text{adjusted}}, & \text{if } \mathbf{B}^{\text{adjusted}} > 0, \\
M, & \text{otherwise}.
\end{cases}
\]

For each item \( i \), we compute the minimum value across all bundles and reshape \( \mathbf{B}_{\text{min}} \) to include the agent dimension:
\[
\mathbf{B}^{\text{bundle}} = \text{Unsqueeze} (\min_{S} \mathbf{B}^{\text{masked}}),
\]

The bundle-to-agent allocation matrix \( \mathbf{A}_{\text{agent-bundle}} \) is computed from the allocation network outputs $\mathbf{A}$ and $\mathbf{A'}$ using the softmax function along the agent and the bundle dimension.
\[
\mathbf{A}^{\text{agent}} = \text{softmax}_N (\mathbf{A}), \quad \mathbf{A}^{\text{bundle}} = \text{softmax}_K(\mathbf{A'})
\]
then taking element-wise minimum of the two matrices
\[
\mathbf{A}^{\text{agent-bundle}} = \min(\mathbf{A}^{\text{agent}}, \mathbf{A}^{\text{bundle}})
\]
The final allocation matrix \( \mathbf{Z} \) is given by:
\[
\mathbf{Z} = \mathbf{B}^{\text{bundle}} \cdot \mathbf{A}^{\text{agent-bundle}},
\]
where each entry \( z_{iS} \) represents the probability of allocating bundle \( S \) to agent \( i \).

\begin{lemma}
    Matrix \( \mathbf{Z} \) is combinatorially feasible.
\end{lemma}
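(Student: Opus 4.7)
The plan is to verify each of the three combinatorial constraints --- entrywise boundedness \eqref{real}, the per-agent constraint \eqref{agent_sum}, and the per-item constraint \eqref{item_sum} --- directly from the algebraic structure of $\mathbf{Z}$, exploiting the fact that every building block of the two factors is either a softmax output or an elementwise minimum of softmaxes.

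First I would establish two key invariants on the factors. For $\mathbf{A}^{\text{agent-bundle}} = \min(\mathbf{A}^{\text{agent}}, \mathbf{A}^{\text{bundle}})$: since $\mathbf{A}^{\text{agent}}$ is a softmax across agents, $\sum_{i} A^{\text{agent}}_{iS} = 1$ for every bundle $S$, and since $\mathbf{A}^{\text{bundle}}$ is a softmax across bundles, $\sum_{S} A^{\text{bundle}}_{iS} = 1$ for every agent $i$. The entrywise minimum preserves nonnegativity and immediately yields
\[
\sum_{i} A^{\text{agent-bundle}}_{iS} \le 1, \qquad \sum_{S} A^{\text{agent-bundle}}_{iS} \le 1.
\]
For $\mathbf{B}^{\text{bundle}}$: by construction $B^{\text{bundle}}_S = \min_{j \in S} B^{\text{masked}}_{jS}$, where the softmax step normalizes scores so that $\sum_{S} B^{\text{adjusted}}_{jS} \le 1$ for each item $j$, and the masking restricts the sum to bundles that actually contain $j$. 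From these two properties I would deduce the crucial bound
\[
\sum_{S : j \in S} B^{\text{bundle}}_S \;\le\; \sum_{S : j \in S} B^{\text{adjusted}}_{jS} \;\le\; 1 \quad \text{for every } j \in M.
\]

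With these invariants in hand, the three constraints follow mechanically. Entrywise, $z_{iS} = B^{\text{bundle}}_S \, A^{\text{agent-bundle}}_{iS}$ is a product of quantities in $[0,1]$, so $z_{iS} \in [0,1]$, giving \eqref{real}. The per-agent constraint \eqref{agent_sum} is obtained by pulling $A^{\text{agent-bundle}}_{iS} \ge z_{iS}$ out of the sum over $S$ and using the row-sum invariant. The per-item constraint \eqref{item_sum} factors the broadcast scalar $B^{\text{bundle}}_S$ out of the sum over agents, applies the column-sum invariant $\sum_i A^{\text{agent-bundle}}_{iS} \le 1$, and then applies the second invariant on $\mathbf{B}^{\text{bundle}}$.

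The main obstacle is the bookkeeping for $\mathbf{B}^{\text{bundle}}$: one must carefully check that the specific combination of incidence masking, softmax direction, replacement of non-positive entries by a large constant $M$, and final elementwise minimum over items $j \in S$ truly produces the inequality $\sum_{S : j \in S} B^{\text{bundle}}_S \le 1$. Everything else reduces to a one-line computation once that piece is pinned down.
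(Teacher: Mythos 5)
Your proposal is correct and rests on the same two invariants as the paper's proof: the row/column sub-stochasticity of $\mathbf{A}^{\text{agent-bundle}}$ obtained from the elementwise minimum of the two softmaxes, and a per-item bound on the bundle factor derived from the softmax normalization of $\mathbf{B}^{\text{adjusted}}$. Where you genuinely diverge is in how the item-wise constraint \eqref{item_sum} is assembled, and your ordering is the sound one. The paper first discards the bundle factor via $z_{iS}\le A^{\text{agent-bundle}}_{iS}$ and then tries to bound $\sum_i\sum_{S\ni j}A^{\text{agent-bundle}}_{iS}\le\sum_{S\ni j}1$ by $1$, which does not follow: there are $2^{m-1}$ bundles containing item $j$, and the appeal to \eqref{bundle_sum_ex} at that point is a non sequitur. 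Your version keeps $B^{\text{bundle}}_S$ inside the sum, collapses the agent sum first using $\sum_i A^{\text{agent-bundle}}_{iS}\le 1$, and only then invokes $\sum_{S\ni j}B^{\text{bundle}}_S\le\sum_{S\ni j}B^{\text{adjusted}}_{jS}\le 1$ --- this is exactly the combination the paper's argument needs but does not execute. You also explicitly verify \eqref{agent_sum} and \eqref{real}, which the paper leaves implicit. One caveat you correctly flag, and which applies equally to the paper: the final bound requires the softmax on $\mathbf{B}\cdot\mathbf{I}$ to normalize \emph{across bundles for each item} (so that $\sum_S B^{\text{adjusted}}_{jS}\le 1$ for each $j$), whereas the text's phrase ``softmax across $m$ items'' would instead give $\sum_j B^{\text{adjusted}}_{jS}=1$ per bundle, under which neither proof closes; the intended normalization must be the per-item one.
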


\begin{proof}

The matrix \( \mathbf{A}^{\text{agent-bundle}} \) represents the bundle-to-agent allocation. Each entry \( A^{\text{agent-bundle}}_{iS} \) is computed as:
\[
A^{\text{agent-bundle}}_{iS} = \min 
\left(\dfrac{e^{a_{iS}/\theta}}{\sum_{i'} e^{a_{i'S}/\theta}},  
\dfrac{e^{a'_{jS}/\theta}}{\sum_{S'} e^{a'_{iS'}/\theta}}
\right).
\footnote{In all of our implementations, the softmax function incorporates a \textit{temperature} parameter \( \theta \in \mathbb{R}^+ \), inspired by the Boltzmann distribution in statistical mechanics, which controls the concentration of probability mass around the highest logit.}
\]
This ensures that
\begin{equation}
    \sum_{i} A^{\text{agent-bundle}}_{iS} \le 1, \forall S \in K
    \label{agent_sum_ex}
\end{equation}
\begin{equation}
    \sum_{S} A^{\text{agent-bundle}}_{iS} \le 1, \forall i \in N
    \label{bundle_sum_ex}
\end{equation}
From the definition of \( z_{iS} \), the allocation of any bundle \( S \) is limited by its least-available item:
\[
\sum_{i} \sum_{S \ni j} z_{iS}
\leq \sum_{i} \sum_{S \ni j} \min_{j \in S} \{ B^{\text{adjusted}}_{jS}  \} \cdot A^{\text{agent-bundle}}_{iS}.
\]
Since \( B^{\text{adjusted}} \) is normalized with softmax along item-wise dimension to ensure that no item \( j \) is allocated more than once:
\[
\sum_{S \ni j} \min_{j \in S} \{ B^{\text{adjusted}}_{jS} \} \leq 
\sum_{S\in K} \min_{j \in S} \{ B^{\text{adjusted}}_{jS} \} \leq 1, \quad \forall j \in M.
\]
Thus,
\[
\sum_{i} \sum_{S \ni j} z_{iS} \leq \sum_{i} \sum_{S \ni j} A^{\text{agent-bundle}}_{iS}.
\]
For a fixed item \( j \), we can write:
\[
\sum_{i} \sum_{S \ni j} A^{\text{agent-bundle}}_{iS} \leq \sum_{S \ni j} \sum_{i} A^{\text{agent-bundle}}_{iS}.
\]
From \eqref{agent_sum_ex}, we have:
\[
\sum_{i} \sum_{S \ni j} A^{\text{agent-bundle}}_{iS} \leq \sum_{S \ni j} 1.
\]
From \eqref{bundle_sum_ex}, the number of bundles \( S \) containing \( j \) is at most the total number of bundles \( k \), constraint \eqref{item_sum} is satisfied:
\[
\sum_{i} \sum_{S \ni j} A^{\text{agent-bundle}}_{iS} \leq 1.
\]
Thus, \( \mathbf{Z} \) is combinatorially feasible.
\end{proof}

\subsection{CANet Architecture}
Motivated by RegretNet \citep{regretnet_paper}, we present CANet, a deep neural network architecture designed for combinatorial auction mechanisms. The CANet architecture, visualized in Figure \ref{fig:CANet_architecture}, comprises two modules: the Deep Allocation Network and the Deep Pricing Network. The input to CANet, denoted by \( \mathbf{b} \), is a vector \( \mathbf{b} \) \( \in \mathbb{R}^{n \cdot k} \), where \( b_{iS} \sim F_i\) is the bid submitted by agent \( i \) for bundle \( S \).

The Deep Allocation Network computes a feasible allocation matrix \( \mathbf{Z} \in \mathbb{R}^{n \times k} \) ensuring that both individual item constraints and global allocation constraints are satisfied. The input \( \mathbf{b} \) is passed through fully connected layers with non-linear activations. The network produces three output vectors $\mathbf{A} \in \mathbb{R}^{n \times k}$ and $\mathbf{A'} \in \mathbb{R}^{n \times k}$ representing an unnormalized agent-bundle allocation; and \( \mathbf{B} \in \mathbb{R}^{m \times k} \) representing an unnormalized bundle-item allocation. These outputs are combined following the process described in section \ref{subsec:constraint_enforcement} to form a feasible allocation matrix $\mathbf{Z}$.

The Deep Pricing Network computes the pricing vector \( \tilde{\mathbf{p}} \in \mathbb{R}^n \), which represents a discount applied to the bids. The input \( \mathbf{b} \) is processed through fully connected layers with sigmoid transformation, to ensure $\tilde{p}_i \in [0, 1]$. The final price is vector \( \mathbf{p} \) with \( p_i = \tilde{p}_i \sum_{S=1}^k z_{iS}b_{iS} \).

\subsection{CAFormer Architecture}

To enhance expressivity and ensure permutation-equivariance and context-awareness, we introduce CAFormer. This architecture combines exchangeable layers (proposed in \cite{deepsets} and used in EquivariantNet \cite{equivarientnet_paper}) with stacked attention layers \citep{attention_is_all_you_need}, as seen in RegretFormer \citep{regretformer_paper}. The architecture is illustrated in Figure \ref{fig:CAFormer_architecture}. The attention mechanism is permutation-invariant, while exchangeable layers ensure permutation-equivariance—properties particularly desirable in symmetric auctions. For asymmetric settings or preserving input order across the bundle dimension, we apply agent-wise and bundle-wise positional encoding as presented in \citep{attention_is_all_you_need} after the exchangeable layers.

The input to CAFormer is a two-dimensional matrix \( \mathbf{b} \in \mathbb{R}^{n \times k} \). Unlike CANet that can output an item-bundle allocation at the last layer, CAFormer must be designed properly to maintain insensitivity to the size of the problem.  We achieve it by transforming the bid matrix into a representation of item-bundle allocation $\mathbf{b}' \in \mathbb{R}^{m \times k} $ through matrix multiplication with the individual item bid matrix $\mathbf{b}^{\text{item}} \in \mathbb{R}^{n \times m}$.
\[
\mathbf{b}' = \text{MatMul}(\mathbf{b}^\top, \mathbf{b}^{\text{item}})^\top,
\]
where \(^\top \) denotes the transpose operation.
Both $\mathbf{b}$ and $\mathbf{b}'$ are then expanded in a feature dimension that contain global information of bids with exchangeable layers, a key building block in architectures like EquivariantNet \citep{equivarientnet_paper} and RegretFormer \citep{regretformer_paper} which ensures permutation-equivariance. This layer processes a tensor \( \mathbf{X} \in \mathbb{R}^{n \times  k \times 1 }\), into an output tensor \( \mathbf{X}_\text{ex} \in \mathbb{R}^{n \times  k \times d} \). The exchangeable layer computes the element of the output channels by aggregating input tensor \( B \) over elements, rows, columns, and globally, weighted by learnable parameters, and applies a non-linear activation \( \sigma \). Details can be found in Deep Sets \citep{deepsets}.
\[
\mathbf{b}_\text{ex} = \text{Exchangeable}(\mathbf{b}), \quad \mathbf{b}_\text{ex}' = \text{Exchangeable}(\mathbf{b}')
\]
After each exchangeable layer, we apply sequential attention-based blocks to the tensors \( \mathbf{b}_{\text{ex}} \) and \( \mathbf{b}_\text{ex}' \). Each block consists of multi-head self-attention layers with residual connections. 

For each layer, the input tensor is reshaped to enable item-wise, bundle-wise, or agent-wise self-attention. Item-wise attention operates on reshaped input \( \mathbf{b}'_{\text{item}} \in \mathbb{R}^{m \times d} \), bundle-wise attention operates on \( \mathbf{b}_{\text{bundle}} \text{ and } \mathbf{b}'_{\text{bundle}} \in \mathbb{R}^{k \times d} \), and agent-wise attention operates on \( \mathbf{b}_{\text{agent}} \in \mathbb{R}^{n \times d} \), where \( d \) is the feature dimensionality of the input. The outputs are concatenated, forming \( \mathbf{b}_{\text{concat}}' \in \mathbb{R}^{m \times k \times d_{\text{concat}}'} \) and \( \mathbf{b}_{\text{concat}} \in \mathbb{R}^{n \times k \times d_{\text{concat}}} \), where \( d_{\text{concat}} \text{ and } d_{\text{concat}}' \) are the combined attended features. After the attention-blocks, we use fully-connected layers \( f_{\text{fc}}^{\text{item}} \), \(f_{\text{fc}}^{\text{bundle}}\), and\( f_{\text{fc}}^{\text{agent}} \) to map the features back to the original dimensionality \( 1 \):
\[
\mathbf{B} = f_{\text{fc}}^{\text{item}}
(\mathbf{b}_{\text{concat}}'),
\]
\[
\mathbf{A} = f_{\text{fc}}^{\text{agent}}(\mathbf{b}_{\text{concat}}), \quad
\mathbf{A'} = f_{\text{fc}}^{\text{bundle}}(\mathbf{b}_{\text{concat}})
\]
Again, following the procedure outlined in Section \ref{subsec:constraint_enforcement}, we derive a feasible allocation matrix \( \mathbf{Z} \). The pricing vector \( \tilde{\mathbf{p}} \in \mathbb{R}^n \) is derived by applying a fully connected layer with sigmoid activation to the agent-wise attention output \( \mathbf{b}_{\text{agent}} \), reducing its dimensionality to 1. The final prices are computed using the same method as in CANet.

\subsection{Training Procedure}

Similarly to RegretNet, our training process also includes an inner utility maximization \eqref{inner_loss} and an outer revenue maximization and regret minimization \eqref{loss_regretnet}. For brevity, we refer the reader to the original paper by \cite{regretnet_paper} for details of the adversarial training. 
Previous works \citep{regretnet_paper,equivarientnet_paper,regretformer_paper} relies heavily on gradient-based Lagrangian optimization (see Section \ref{regretnet_intro}). This method is known to be sensitive to the choice of hyper-parameters. An example is in the outer loss \eqref{loss_regretnet}, when \(\lambda_i\) becomes excessively large, the model disproportionately focuses on minimizing regret, and gradients for regret become negligible due to a scaling effect or poor numerical conditioning. To ensure a stable tradeoff, we constrain the loss weights such that their sum is 1. This simple balancing mechanism does not directly tackle the overemphasis on regret, but it ensures regret gradient does not shrink due to the excessive weight $\lambda_i$. Additionally, we apply a logarithmic transformation to the revenue term to avoid unbounded growth. This is in the spirit of multi-task learning, which seeks to optimize both objectives in the present of trade-offs. Our outer loss function is defined as:

\[
\mathcal{L}_{\text{outer}} = - w_{rev} \log(1 +\text{rev}) + w_{rgt}\text{rgt},
\]
where \(w_{rev}\) and \(w_{rgt}\) are task weights that represent the emphasis on revenue and regret, respectively. 


Following \cite{regretformer_paper}, we also set a regret budget, allowing controlled violations of DSIC by gradually decreasing the budget during training. The update for \(w_{rgt}\) is explicitly computed as:
\begin{equation}
   g_t = \log(\text{rgt})  - \log(\bar{\text{rgt}})-  \log(1 + \alpha\text{rev})
   \label{eq:rgt_update}
\end{equation}
where \(\bar{\text{rgt}}\) is a target regret value. The parameter $\alpha$ is optional and adjusts the ratio between the target regret and revenue.

To ensure proper balance between revenue and regret, we normalize the weights with tanh function scaled with paramter $\rho$.
Our $w_{rev}$ and $w_{rgt}$ are maintained in range $[0, 1]$. Using exponential moving averages for the first and second moments of \(g_t\), we update \(w_{rgt}\) with a bias-corrected Adam-style rule with learning rate $\gamma$. We explicitly present the update rule for the weights in Algorithm \ref{alg:update_rule}.

\begin{algorithm}[tb]
    \caption{Adam Update Rule for $w_{rev}$ and $w_{rgt}$}
    \label{alg:update_rule}
    \begin{algorithmic}[1]
        \STATE Initialize $m_w \gets 0$, $v_w \gets 0$, $\beta_1$, $\beta_2$, $\epsilon$, $\gamma_w, \rho$
        
        \STATE Compute gradient $g_t$ for $w_{rgt}$ with equation (\ref{eq:rgt_update})

        \STATE Update first and second moments
        \STATE \quad $m_w \gets \beta_1 m_w + (1 - \beta_1) g_t$
        \STATE \quad $v_w \gets \beta_2 v_w + (1 - \beta_2) g_t^2$
        
        \STATE Bias correction
        \STATE \quad $\hat{m}_w \gets \dfrac{m_w }{1 - \beta_1^{\epsilon + 1}}; \quad \hat{v}_w \gets \dfrac{v_w}{ (1 - \beta_2^{\epsilon + 1}}$

        \STATE Update $w_{rgt}$
        \STATE \quad $w_{rgt} \gets w_{rgt} + \gamma_w \dfrac{\hat{m}_w }{ (\sqrt{\hat{v}_w} + \epsilon)}$

        \STATE Normalize weights
        \STATE \quad $w_{rgt} = \max \left(\frac{e^{w_{rgt}/\rho} - e^{-w_{rgt}/\rho}}{e^{w_{rgt}/\rho} + e^{-w_{rev}/\rho}}, 0\right)$
        \STATE \quad $w_{rev} \gets 1 - w_{rgt}$

        \STATE return $w_{rev}, w_{rgt}$

    \end{algorithmic}
\end{algorithm}

\section{Experimental Results} \label{sec:results}

\begin{table*}
\centering
\resizebox{\textwidth}{!}{%
\begin{tblr}{
  width = \textwidth,
  colspec = {Q[110]Q[67]Q[67]Q[67]Q[67]Q[67]Q[67]Q[67]Q[67]Q[67]Q[67]Q[77]Q[67]},
  row{1} = {c},
  row{2} = {c},
  cell{1}{2} = {c=2}{0.134\textwidth},
  cell{1}{4} = {c=2}{0.134\textwidth},
  cell{1}{6} = {c=2}{0.134\textwidth},
  cell{1}{8} = {c=2}{0.134\textwidth},
  cell{1}{10} = {c=2}{0.134\textwidth},
  cell{1}{12} = {c=2}{0.144\textwidth},
  hline{1,3,12} = {-}{},
}
          & 2x2 (B) &       & 2x3 (B) &       & 2x5 (B) &       & 2x2 (C) &       & 2x3 (C) &       & 2x5 (C) &       \\
          & Rev     & Rgt   & Rev     & Rgt   & Rev     & Rgt   & Rev     & Rgt   & Rev     & Rgt   & Rev     & Rgt   \\
VCG       & 2.405   & 0     & 3.537   & 0     & 5.838   & 0     & 2.847   & 0     & 4.239   & 0     & 6.987   & 0     \\
AMA       & 2.760   & 0     & -       & -     & -       & -     & 4.240   & 0     & -       & -     & -       & 0     \\
VVCA      & 2.770   & 0     & -       & -     & -       & -     & 4.240   & 0     & -       & -     & -       & 0     \\
BLAMA     & 2.630   & 0     & 4.125   & 0     & 7.280   & 0     & 4.080   & 0     & 4.812   & 0     & 8.164   & 0     \\
ABAMA     & 2.630   & 0     & 4.166   & 0     & 7.145   & 0     & 4.010   & 0     & 5.916   & 0     & 11.140  & 0     \\
BBBVVCA   & 2.620   & 0     & 4.105   & 0     & 7.156   & 0     & 4.010   & 0     & 5.898   & 0     & 11.135  & 0     \\
RegretNet & 2.871   & 0.001 & -       & -     & -       & -     & 4.270   & 0.001 & -       & -     & -       & -     \\
CANet     & 2.904   & 0.001 & 4.369   & 0.002 & 7.304   & 0.007 & 4.285   & 0.001 & 6.556   & 0.001 & 11.393  & 0.002 \\
CAFormer  & \textbf{2.919}   & 0.001 & \textbf{4.388}   & 0.002 & \textbf{7.318}   & 0.003 & \textbf{4.403}   & 0.002 & \textbf{6.693}   & 0.001 & \textbf{11.534}  & 0.004 
\end{tblr}
}
\caption{Combinatorial results: Grid search methods (AMA and VVCA) become computationally infeasible in higher dimensions. RegretNet results are limited to \( 2 \times 2 \) cases. CANet and CAFormer beat the benchmarks in all settings. Both CAFormer and CANet achieve negligible regret. But CAFormer consistently outperforms CANet in revenue.}
\label{combinatorial_results}
\end{table*}

\paragraph{Set up} 
Our framework is implemented in PyTorch and trained for 50,000 or 100,000 outer optimization iterations, depending on the problem size. 
All networks used Glorot uniform initialization \citep{glorot_xavier} and tanh activations. 
We sampled 640,000 valuation profiles offline for training and 10,000 profiles online for testing. Each outer optimization update included 50 inner steps during training and 1,000 during validation. Typical hyperparameters include: tanh scaler $\rho = 2$, softmax temperature $\theta \in \{10, 15, 25\}$, network learning rates $\{0.0004, 0.0007\}$, regret learning rate $\gamma \in \{0.01, 0.02\}$ and the target regret-revenue adjustment factor $\alpha \in \{0.5, 1\}$. CANet’s allocation and pricing networks use $\{3, 6\}$ layers with 100 hidden nodes each, while CAFormer has a single attention layer with 2 heads and 64 hidden features. We initialized $w_{\text{rgt}} = 1$, exponentially annealling regret targets from $\bar{\text{rgt}}_{\text{start}} = 0.05$ to $\bar{\text{rgt}}_{\text{end}} \in \{0.0008, 0.001, 0.002, 0.003\}$ in $2/3$ of the training iterationw.  Experiments are conducted on three scales: 2 bidders, 2 items (3 bundles); 2 bidders, 3 items (7 bundles); and 2 bidders, 5 items (31 bundles). 

\paragraph{Baselines} We compare against the VCG mechanism \citep{Vickrey,Clarke,Groves}, previously reported results from RegretNet and RegretFormer \citep{regretformer_paper}, Affine Maximizer Auction trained via grid search (AMA and VVCA), and local search methods (BLAMA, ABAMA, BBBVVCA) \citep{Sandholm_2015}. We re-implement VCG mechanism and the search algorithms \citep{Sandholm_2015}. The search algorithms are trained on 100 samples for 10 different random seeds, and evaluated on 1,000,000 samples.

\paragraph{Performance in non-combinatorial setting} 
We compare the computational results for non-combinatorial setting, where bidders draw their value for each item from $U[0, 1]$ (setting A). In these experiments, we make additive valuation assumption.
The revenue performance is reported in Table \ref{non-combinatorial_results}.

Our revenue performance is comparable to the machine learning-powered models, which outperform the heuristic designs. The slight underperformance in comparison to RegretNet and RegretFormer is due to the complex constrained optimization space. For instance, in the 2-agent, 5-item setting, RegretNet and RegretFormer optimize allocations at the agent-item level (\(2 \times 5\)), while CANet and CAFormer optimize at the agent-bundle level (\(2 \times 31\)), which introduces excessive complexity to non-combinatorial settings, preventing full convergence. Besides, regret estimation is less reliable in larger-scale settings, because adversarial optimization of the inner loss \eqref{inner_loss} might be inaccurate, which affects the convergence of optimal revenue. If the utility of the best misreport is underestimated, the revenue might be overestimated. Future work is encouraged to evaluate this approximation's accuracy.

\begin{table}
\centering
\resizebox{\linewidth}{!}{%
\begin{tblr}{
  width = \linewidth,
  colspec = {Q[250]Q[110]Q[110]Q[110]Q[110]Q[110]Q[110]},
  row{1} = {c},
  row{2} = {c},
  cell{1}{2} = {c=2}{0.22\linewidth},
  cell{1}{4} = {c=2}{0.22\linewidth},
  cell{1}{6} = {c=2}{0.22\linewidth},
  hline{1,3,13} = {-}{},
}
             & 2x2 (A) &       & 2x3 (A) &       & 2x5 (A) &       \\
             & Rev     & Rgt   & Rev     & Rgt   & Rev     & Rgt   \\
VCG          & 0.667   & 0     & 1       & 0     & 1.671   & 0     \\
AMA          & 0.860   & 0     & -       & 0     & -       & 0     \\
VVCA         & 0.866   & 0     & -       & 0     & -       & 0     \\
BLAMA        & 0.786   & 0     & 1.222   & 0     & 2.231   & 0     \\
ABAMA        & 0.786   & 0     & 1.255   & 0     & 2.251   & 0     \\
BBBVVCA      & 0.776   & 0     & 1.238   & 0     & 2.242   & 0     \\
RegretNet    & 0.878   & 0.001 & 1.317   & 0.001 & 2.339   & 0.001 \\
RegretFormer & \textbf{0.908}   & 0.001 & \textbf{1.416}   & 0.001 & \textbf{2.453}   & 0.001 \\
CANet        & 0.879   & 0.001 & 1.317   & 0.001 & 2.282   & 0.004 \\
CAFormer     & 0.891   & 0.001 & 1.326   & 0.001 & 2.329   & 0.005 
\end{tblr}
}
\caption{Non-combinatorial results: Our revenue performance is comparable to that of RegretNet and RegretFormer, which outperform heuristic designs.}
\label{non-combinatorial_results}
\end{table}

\paragraph{Performance in combinatorial setting} We compare the results in combinatorial setting.
Let $v_{ij}^{\text{item}}$ be the valuation of of bidder $i$ for item $j$, individually. The valuation of bidder $i$ for bundle $S$ is drawn as \( v_{iS} = \sum_{j\in S} v_{ij}^{\text{item}} + c_{iS} \) with \( c_{iS} \sim U[-1, 1] \). For symmetric settings (B),  $v_{ij}^{\text{item}} \sim U[1,2], \forall i \in N$. For asymmetric settings (C), $v_{1j}^{\text{item}} \sim U[1,2] \text{ and } v_{2j}^{\text{item}} \sim U[1,5]$.

The results in Table \ref{combinatorial_results} show that our models outperform heuristic benchmarks and RegretNet in all settings. As we specify the regret target as a proportion of revenue, the regret is higher in larger-scale settings. Both CAFormer and CANet achieve negligible regret in all experiments. However, CAFormer, with a greater number of parameters and expressivity, consistently outperforms CANet in revenue.

\section{Discussion and Future Work} \label{sec:discussion}

We view our approach as an initial step towards generalizing differentiable economics for combinatorial auction, though it remains an open theoretical question whether it can fully represent the true optimal mechanism. Although most revenue-maximizing heuristic methods struggle with combinatorial explosion in larger-scale settings, our approach offers greater efficiency. While our formulation and experiments focused on valuations with known distributions (particularly uniform) and considered all possible bundles, by design, this approach is inherently not limited to assumptions about the structures of bundles and valuation profiles. It can even be applied to online auctions and auctions whose distribution is unknown.

In practice, if items are divisible, fractional allocations directly represent feasible outcomes; however, if they are not, implementing the lottery in large-scale settings can be challenging. In some cases, initial use of randomized mechanisms can enable convergence to effective solutions, which can subsequently be fine-tuned to obtain deterministic results.

Of the two proposed architectures, CAFormer not only outperforms in revenue, but also offers other advantages: permutation-equivariance, the ability to generalize to unseen settings with variable input size, and to incorporate contextual information. Moreover, we offer a technical perspective, which considers the multi-task nature of the problem, providing flexibility in training. However, the non-convexity of the min-max optimization poses a challenge for gradient-based methods, as they lack theoretical guarantees for convergence. This limits the power of differentiable economics in larger-scale problems. Exploring alternative loss-balancing techniques could yield further insights, but this is beyond the scope of our current work and is left for future investigation. 

Although our focus is on auctions, we expect that the proposed approach could extend to other differentiable combinatorial optimization problems, as long as the feasible region can be parameterized for gradient-based learning and the objective function remains differentiable. Developing techniques to incorporate constraints more effectively in gradient-based optimization is a promising future research direction.

\clearpage

\bibliographystyle{named}
\bibliography{ijcai25}

\begin{thebibliography}{}

\bibitem[\protect\citeauthoryear{Armstrong}{2000}]{armstrong_2000}
Mark Armstrong.
\newblock Optimal multi-object auctions.
\newblock {\em The Review of Economic Studies}, 67(3):455--481, 07 2000.

\bibitem[\protect\citeauthoryear{Avery and Hendershott}{2000}]{avery_2000}
Christopher Avery and Terrence Hendershott.
\newblock Bundling and optimal auctions of multiple products.
\newblock {\em The Review of Economic Studies}, 67(3):483--497, 2000.

\bibitem[\protect\citeauthoryear{Bajari and Hortaçsu}{2003}]{bajari_2003_ebay}
Patrick Bajari and Ali Hortaçsu.
\newblock The winner's curse, reserve prices, and endogenous entry: Empirical insights from ebay auctions.
\newblock {\em The RAND Journal of Economics}, 34(2):329--355, 2003.

\bibitem[\protect\citeauthoryear{Brero \bgroup \em et al.\egroup }{2018}]{gianluca_elicite_2018}
Gianluca Brero, Benjamin Lubin, and Sven Seuken.
\newblock Combinatorial auctions via machine learning-based preference elicitation.
\newblock In {\em Proceedings of the Twenty-Seventh International Joint Conference on Artificial Intelligence, {IJCAI-18}}, pages 128--136. International Joint Conferences on Artificial Intelligence Organization, 7 2018.

\bibitem[\protect\citeauthoryear{Brero \bgroup \em et al.\egroup }{2021}]{brero2021iterativecombinatorial}
Gianluca Brero, Benjamin Lubin, and Sven Seuken.
\newblock Machine learning-powered iterative combinatorial auctions, 2021.

\bibitem[\protect\citeauthoryear{Clarke}{1971}]{Clarke}
Edward~H. Clarke.
\newblock Multipart pricing of public goods.
\newblock {\em Public Choice}, 11:17--33, 1971.

\bibitem[\protect\citeauthoryear{Cramton \bgroup \em et al.\egroup }{2005}]{cramton2005survey}
Peter Cramton, Yoav Shoham, and Richard Steinberg.
\newblock {\em Combinatorial Auctions}.
\newblock The MIT Press, December 2005.

\bibitem[\protect\citeauthoryear{Curry \bgroup \em et al.\egroup }{2023}]{Curry_2023_AMA}
Michael Curry, Tuomas Sandholm, and John Dickerson.
\newblock Differentiable economics for randomized affine maximizer auctions.
\newblock In {\em Proceedings of the Thirty-Second International Joint Conference on Artificial Intelligence}, IJCAI-2023, page 2633–2641. International Joint Conferences on Artificial Intelligence Organization, August 2023.

\bibitem[\protect\citeauthoryear{Duetting \bgroup \em et al.\egroup }{2019}]{regretnet_paper}
Paul Duetting, Zhe Feng, Harikrishna Narasimhan, David Parkes, and Sai~Srivatsa Ravindranath.
\newblock Optimal auctions through deep learning.
\newblock In Kamalika Chaudhuri and Ruslan Salakhutdinov, editors, {\em Proceedings of the 36th International Conference on Machine Learning}, volume~97 of {\em Proceedings of Machine Learning Research}, pages 1706--1715. PMLR, 09--15 Jun 2019.

\bibitem[\protect\citeauthoryear{Edelman \bgroup \em et al.\egroup }{2007}]{edelman_2007_ads}
Benjamin Edelman, Michael Ostrovsky, and Michael Schwarz.
\newblock Internet advertising and the generalized second-price auction: Selling billions of dollars worth of keywords.
\newblock {\em American Economic Review}, 97(1):242–259, March 2007.

\bibitem[\protect\citeauthoryear{Feng \bgroup \em et al.\egroup }{2018}]{feng2018budget}
Zhe Feng, Harikrishna Narasimhan, and David~C. Parkes.
\newblock Deep learning for revenue-optimal auctions with budgets.
\newblock In {\em Proceedings of the 17th International Conference on Autonomous Agents and MultiAgent Systems}, AAMAS '18, page 354–362, Richland, SC, 2018. International Foundation for Autonomous Agents and Multiagent Systems.

\bibitem[\protect\citeauthoryear{Glorot and Bengio}{2010}]{glorot_xavier}
Xavier Glorot and Yoshua Bengio.
\newblock Understanding the difficulty of training deep feedforward neural networks.
\newblock In Yee~Whye Teh and Mike Titterington, editors, {\em Proceedings of the Thirteenth International Conference on Artificial Intelligence and Statistics}, volume~9 of {\em Proceedings of Machine Learning Research}, pages 249--256, Chia Laguna Resort, Sardinia, Italy, 13--15 May 2010. PMLR.

\bibitem[\protect\citeauthoryear{Golowich \bgroup \em et al.\egroup }{2018}]{Golowich_2018}
Noah Golowich, Harikrishna Narasimhan, and David~C. Parkes.
\newblock Deep learning for multi-facility location mechanism design.
\newblock In {\em Proceedings of the Twenty-Seventh International Joint Conference on Artificial Intelligence}, IJCAI-2018, page 261–267. International Joint Conferences on Artificial Intelligence Organization, July 2018.

\bibitem[\protect\citeauthoryear{Groves}{1973}]{Groves}
Theodore Groves.
\newblock Incentives in teams.
\newblock {\em Econometrica}, 41(4):617--631, 1973.

\bibitem[\protect\citeauthoryear{Ivanov \bgroup \em et al.\egroup }{2022}]{regretformer_paper}
Dmitry Ivanov, Iskander Safiulin, Igor Filippov, and Ksenia Balabaeva.
\newblock Optimal-er auctions through attention, 2022.

\bibitem[\protect\citeauthoryear{Jamshidi \bgroup \em et al.\egroup }{2024}]{jamshidi2024diffprivacy}
Arash Jamshidi, Seyed~Mohammad Hosseini, Seyed~Mahdi Noormousavi, and Mahdi~Jafari Siavoshani.
\newblock Differentially private machine learning-powered combinatorial auction design, 2024.

\bibitem[\protect\citeauthoryear{Kuo \bgroup \em et al.\egroup }{2020}]{Kuo2020ProportionNet}
Kevin Kuo, Anthony Ostuni, Elizabeth Horishny, Michael~J. Curry, Samuel Dooley, Ping yeh Chiang, Tom Goldstein, and John~P. Dickerson.
\newblock Proportionnet: Balancing fairness and revenue for auction design with deep learning.
\newblock {\em ArXiv}, abs/2010.06398, 2020.

\bibitem[\protect\citeauthoryear{Lavi and Swamy}{2011}]{laviswamy}
Ron Lavi and Chaitanya Swamy.
\newblock Truthful and near-optimal mechanism design via linear programming.
\newblock {\em J. ACM}, 58(6), December 2011.

\bibitem[\protect\citeauthoryear{Maskin \bgroup \em et al.\egroup }{1989}]{maskin_1989}
Eric Maskin, J.~Riley, and F.~Hahn.
\newblock {\em Optimal Multi-Unit Auctions}, pages 312--335.
\newblock Oxford University Press, 1989.
\newblock Reprinted in P. Klemperer, The Economic Theory of Auctions, London: Edward Elgar, 2000.

\bibitem[\protect\citeauthoryear{Myerson}{1981}]{Myerson}
Roger~B. Myerson.
\newblock Optimal auction design.
\newblock {\em Mathematics of Operations Research}, 6(1):58--73, 1981.

\bibitem[\protect\citeauthoryear{Rahme \bgroup \em et al.\egroup }{2021a}]{equivarientnet_paper}
Jad Rahme, Samy Jelassi, Joan Bruna, and S.~Matthew Weinberg.
\newblock A permutation-equivariant neural network architecture for auction design, 2021.

\bibitem[\protect\citeauthoryear{Rahme \bgroup \em et al.\egroup }{2021b}]{algnet_paper}
Jad Rahme, Samy Jelassi, and S.~Matthew Weinberg.
\newblock Auction learning as a two-player game, 2021.

\bibitem[\protect\citeauthoryear{Ravindranath \bgroup \em et al.\egroup }{2023}]{ravindranath2023twosided}
Sai~Srivatsa Ravindranath, Zhe Feng, Shira Li, Jonathan Ma, Scott~D. Kominers, and David~C. Parkes.
\newblock Deep learning for two-sided matching, 2023.

\bibitem[\protect\citeauthoryear{Ravindranath \bgroup \em et al.\egroup }{2024}]{ravindranath2024RLsequential}
Sai~Srivatsa Ravindranath, Zhe Feng, Di~Wang, Manzil Zaheer, Aranyak Mehta, and David~C. Parkes.
\newblock Deep reinforcement learning for sequential combinatorial auctions, 2024.

\bibitem[\protect\citeauthoryear{Ray \bgroup \em et al.\egroup }{2021}]{Ray_2021}
Abhishek Ray, Mario Ventresca, and Karthik Kannan.
\newblock A graph-based ant algorithm for the winner determination problem in combinatorial auctions.
\newblock {\em Information Systems Research}, 32(4):1099–1114, December 2021.

\bibitem[\protect\citeauthoryear{Roberts}{1979}]{roberts1979}
K.~Roberts.
\newblock The characterization of implementable choice rules.
\newblock In J.-J. Laffont, editor, {\em Aggregation and Revelation of Preferences}, pages 321--348. North Holland Publishing, 1979.

\bibitem[\protect\citeauthoryear{Sandholm and Likhodedov}{2015}]{Sandholm_2015}
Tuomas Sandholm and Anton Likhodedov.
\newblock Automated design of revenue-maximizing combinatorial auctions.
\newblock {\em Operations Research}, 63(5):1000–1025, October 2015.

\bibitem[\protect\citeauthoryear{Sandholm}{2002}]{Sandholm_2002}
Tuomas Sandholm.
\newblock Algorithm for optimal winner determination in combinatorial auctions.
\newblock {\em Artificial Intelligence}, 135(1–2):1–54, February 2002.

\bibitem[\protect\citeauthoryear{Sandholm}{2003}]{sandholm_2003_amd}
Tuomas Sandholm.
\newblock Automated mechanism design: A new application area for search algorithms.
\newblock In Francesca Rossi, editor, {\em Principles and Practice of Constraint Programming -- CP 2003}, pages 19--36, Berlin, Heidelberg, 2003. Springer Berlin Heidelberg.

\bibitem[\protect\citeauthoryear{Tacchetti \bgroup \em et al.\egroup }{2022}]{tacchetti2022combinatorialauction_groves}
Andrea Tacchetti, DJ~Strouse, Marta Garnelo, Thore Graepel, and Yoram Bachrach.
\newblock Learning truthful, efficient, and welfare maximizing auction rules, 2022.

\bibitem[\protect\citeauthoryear{Vaswani \bgroup \em et al.\egroup }{2017}]{attention_is_all_you_need}
Ashish Vaswani, Noam Shazeer, Niki Parmar, Jakob Uszkoreit, Llion Jones, Aidan~N Gomez, \L~ukasz Kaiser, and Illia Polosukhin.
\newblock Attention is all you need.
\newblock In I.~Guyon, U.~Von Luxburg, S.~Bengio, H.~Wallach, R.~Fergus, S.~Vishwanathan, and R.~Garnett, editors, {\em Advances in Neural Information Processing Systems}, volume~30. Curran Associates, Inc., 2017.

\bibitem[\protect\citeauthoryear{Vickrey}{1961}]{Vickrey}
William Vickrey.
\newblock Counterspeculation, auctions, and competitive sealed tenders.
\newblock {\em The Journal of Finance}, 16(1):8--37, 1961.

\bibitem[\protect\citeauthoryear{Wu and Hao}{2016}]{Wu_2016_clique}
Qinghua Wu and Jin-Kao Hao.
\newblock A clique-based exact method for optimal winner determination in combinatorial auctions.
\newblock {\em Information Sciences}, 334–335:103–121, March 2016.

\bibitem[\protect\citeauthoryear{Zaheer \bgroup \em et al.\egroup }{2018}]{deepsets}
Manzil Zaheer, Satwik Kottur, Siamak Ravanbakhsh, Barnabas Poczos, Ruslan Salakhutdinov, and Alexander Smola.
\newblock Deep sets, 2018.

\end{thebibliography}

\end{document}